\newtheorem{definition}{Definition}
\newtheorem{proposition}[definition]{Proposition}
\newtheorem{lemma}[definition]{Lemma}
\newtheorem{theorem}[definition]{Theorem}
\def\squareforqed{\hbox{\rlap{$\sqcap$}$\sqcup$}}
\def\qed{\ifmmode\squareforqed\else{\unskip\nobreak\hfil
\penalty50\hskip1em\null\nobreak\hfil\squareforqed
\parfillskip=0pt\finalhyphendemerits=0\endgraf}\fi}
\def\endenv{\ifmmode\;\else{\unskip\nobreak\hfil
\penalty50\hskip1em\null\nobreak\hfil\;
\parfillskip=0pt\finalhyphendemerits=0\endgraf}\fi}
\newenvironment{proof}{\noindent \textbf{{Proof~} }}{\qed}
\mathchardef\ordinarycolon\mathcode`\:
\def\vcentcolon{\mathrel{\mathop\ordinarycolon}}
\newcommand{\nc}{\newcommand}
\nc{\rnc}{\renewcommand}
\nc{\beg}{\begin{equation}}
\nc{\eeq}{{\end{equation}}}
\nc{\beqa}{\begin{eqnarray}}
\nc{\eeqa}{\end{eqnarray}}
\nc{\lbar}[1]{\overline{#1}}
\nc{\bra}[1]{\langle#1|}
\nc{\ket}[1]{|#1\rangle}
\nc{\ketbra}[2]{|#1\rangle\!\langle#2|}
\nc{\braket}[2]{\langle#1|#2\rangle}
\nc{\proj}[1]{| #1\rangle\!\langle #1 |}
\nc{\avg}[1]{\langle#1\rangle}
\nc{\Rank}{\operatorname{Rank}}
\nc{\smfrac}[2]{\mbox{$\frac{#1}{#2}$}}
\nc{\tr}{\operatorname{Tr}}
\nc{\ox}{\otimes}
\nc{\dg}{\dagger}
\nc{\dn}{\downarrow}
\nc{\cA}{{\cal A}}
\nc{\cB}{{\cal B}}
\nc{\cC}{{\cal C}}
\nc{\cD}{{\cal D}}
\nc{\cE}{{\cal E}}
\nc{\cF}{{\cal F}}
\nc{\cG}{{\cal G}}
\nc{\cH}{{\cal H}}
\nc{\cI}{{\cal I}}
\nc{\cJ}{{\cal J}}
\nc{\cK}{{\cal K}}
\nc{\cL}{{\cal L}}
\nc{\cM}{{\cal M}}
\nc{\cN}{{\cal N}}
\nc{\cO}{{\cal O}}
\nc{\cP}{{\cal P}}
\nc{\cQ}{{\cal Q}}
\nc{\cR}{{\cal R}}
\nc{\cS}{{\cal S}}
\nc{\cT}{{\cal T}}
\nc{\cX}{{\cal X}}
\nc{\cY}{{\cal Y}}
\nc{\cZ}{{\cal Z}}
\nc{\cW}{{\cal W}}
\nc{\csupp}{{\operatorname{csupp}}}
\nc{\qsupp}{{\operatorname{qsupp}}}
\nc{\var}{{\operatorname{var}}}
\nc{\rar}{\rightarrow}
\nc{\lrar}{\longrightarrow}
\nc{\polylog}{{\operatorname{polylog}}}
\nc{\wt}{{\operatorname{wt}}}
\nc{\av}[1]{{\left\langle {#1} \right\rangle}}
\nc{\supp}{{\operatorname{supp}}}
\def\a{\alpha}
\def\G{\Gamma}
\def\S{\Sigma}
\def\U{\Upsilon}
\nc{\RR}{{{\mathbb R}}}
\nc{\CC}{{{\mathbb C}}}
\nc{\FF}{{{\mathbb F}}}
\nc{\NN}{{{\mathbb N}}}
\nc{\ZZ}{{{\mathbb Z}}}
\nc{\PP}{{{\mathbb P}}}
\nc{\QQ}{{{\mathbb Q}}}
\nc{\UU}{{{\mathbb U}}}
\nc{\EE}{{{\mathbb E}}}
\nc{\id}{{\operatorname{id}}}
\newcommand{\cvartheta}{{\widetilde\vartheta}}
\nc{\CHSH}{{\operatorname{CHSH}}}
\nc{\be}{\begin{equation}}
\nc{\ee}{{\end{equation}}}
\nc{\bea}{\begin{eqnarray}}
\nc{\eea}{\end{eqnarray}}
\nc{\Hom}[2]{\mbox{Hom}(\CC^{#1},\CC^{#2})}
\nc{\rU}{\mbox{U}}
\nc{\ob}[1]{#1}
\nc{\SEP}{{\text{SEP}}}
\nc{\NS}{{\text{NS}}}
\nc{\LOCC}{{\text{LOCC}}}
\nc{\PPT}{{\text{PPT}}}
\nc{\EXT}{{\text{EXT}}}
\nc{\Sym}{{\operatorname{Sym}}}
\nc{\ERLO}{{E_{\text{r,LO}}}}
\nc{\ERLOCC}{{E_{\text{r,LOCC}}}}
\nc{\ERPPT}{{E_{\text{r,PPT}}}}
\nc{\ERLOCCinfty}{{E^{\infty}_{\text{r,LOCC}}}}
\nc{\Aram}{{\operatorname{\sf A}}}
\begin{document}

\title{Separation between quantum Lov\'asz number and entanglement-assisted zero-error classical capacity}

\author{Xin Wang$^{1}$}
\email{xin.wang-8@student.uts.edu.au}
\author{Runyao Duan$^{1,2}$}
\email{runyao.duan@uts.edu.au}

\affiliation{$^1$Centre for Quantum Software and Information, School of Software, Faculty of Engineering and Information Technology, University of Technology Sydney, NSW 2007, Australia}
\affiliation{$^2$UTS-AMSS Joint Research Laboratory for Quantum Computation and Quantum Information Processing, Academy of Mathematics and Systems Science, Chinese Academy of Sciences, Beijing 100190, China}

\thanks{A preliminary version of this paper was presented as a contributed talk at the 16th Asian Quantum Information Science Conference (AQIS'16).}

\begin{abstract}
Quantum Lov\'asz number is a quantum generalization of the Lov\'asz number in graph theory. It is the best known efficiently computable upper bound of the entanglement-assisted zero-error classical capacity of a quantum channel. However, it remains an intriguing open problem whether quantum entanglement can always enhance the zero-error capacity to achieve the quantum Lov\'asz number. In this paper, by constructing a particular class of qutrit-to-qutrit channels, we show that there exists a strict gap between the entanglement-assisted zero-error capacity and the quantum Lov\'asz number. 
Interestingly, for this class of quantum channels, the quantum generalization of fractional packing number is strictly larger than the zero-error capacity assisted with feedback or no-signalling correlations, which differs from the case of classical channels.
\end{abstract}
\maketitle

\section{Introduction}
A fundamental problem of information theory is to determine the capability of a communication channel to deliver messages from the sender to the receiver. While the conventional information theory focuses on sending messages with asymptotically vanishing errors~\cite{Shannon1948}, Shannon also investigated this problem in the zero-error setting and described the zero-error capacity of a channel as the maximum rate at which it can be used to transmit information with a zero probability of error~\cite{Shannon1956}. Zero-error information theory~\cite{Shannon1956,Korner1998} concerns the asymptotic combinatorial problems, most of which are difficult and unsolved. 

Recently the zero-error information theory has been studied in the quantum setting and many new phenomena were observed.  One remarkable result is the super-activation of the zero-error classical/quantum capacities of quantum channels~\cite{Duan2008a, Duan2009, Cubitt2011a, Cubitt2012, Shirokov2015a}. Another important result is that, for some classical channels, quantum entanglement can be used to improve the zero-error capacity~\cite{Cubitt2010, Leung2012}, while there is no such advantage for the normal capacity~\cite{Shannon1948}.  Furthermore, there are more kinds of capacities when considering auxiliary resources,  such as the shared entanglement~\cite{Duan2008a, Cubitt2010,  Leung2012, Duan2013, Briet2015, Piovesan2015,Stahlke2014}, the no-signalling correlations~\cite{Matthews2012,Cubitt2011,Duan2016,Leung2015c,Xie2017a,Wang2016g,Wang2017d,Wang2017a},  and the feedback assistance~\cite{Shannon1956,Duan2015}. All of these capacities are only partially understood, and the zero-error information theory of quantum channels seems more complex than that of classical channels.

To study the zero-error communication via quantum channels, the so-called ``non-commutative graph theory''  was introduced in~\cite{Duan2013}. The non-commutative graph (an object based on an operator system) associated with a quantum channel fully captures the zero-error communication properties of this channel~\cite{Duan2013}, thus playing a similar role to confusability graph in the classical case. It is well-known that the zero-error capacity is extremely difficult to compute for both classical and quantum channels~\cite{Beigi2007}. Nevertheless, the zero-error capacity of a classical channel is upper bounded by the Lov\'asz number~\cite{Lovasz1979} while the zero-error capacity of a quantum channel is upper bounded by the quantum Lov\'asz number~\cite{Duan2013}. Furthermore, the entanglement-assisted zero-error capacity of a classical channel is also upper-bounded by the Lov\'asz  number~\cite{Beigi2010,Duan2013}, and this result can be generalized to quantum channels by using the quantum Lov\'asz  number~\cite{Duan2013}. 

A more general problem is the simulation of a channel, which concerns how to use a channel $\cN$ from Alice ($A$) to Bob ($B$) to simulate another channel $\cM$ also from $A$ to $B$~\cite{Kretschmann2004}. Shannon's noisy channel coding theorem determines the capability of any noisy channel $\cN$ to simulate a noiseless channel~\cite{Shannon1948} and the reverse Shannon theorem was proved in~\cite{Bennett2002}.  The quantum reverse Shannon theorem was proved recently~\cite{Bennett2014, Berta2011a}, which states that any quantum channel can be simulated by an amount of classical communication equal to its entanglement-assisted capacity assisted with free entanglement. In the zero-error setting, there is a kind of reversibility between the zero-error capacity and  simulation cost in the presence of no-signalling correlations~\cite{Cubitt2011}. More recently, the no-signalling-assisted (NS-assisted) zero-error simulation cost of a quantum channel was introduced in~\cite{Duan2016}.

An intriguing open problem in zero-error information theory is whether the entanglement-assisted zero-error capacity always coincides with the quantum  Lov\'asz number for a classical or quantum channel, which is frequently mentioned in~\cite{Leung2012,Duan2013,Beigi2010,Cubitt2011,Cubitt2014a,Mancinska2013}. If they are equal, it will imply that the entanglement-assisted zero-error capacity is additive, while the unassisted case is not~\cite{Alon1998}.

In this paper, we resolve the above open problem for quantum channels. To be specific, we construct a class of qutrit-to-qutrit channels for which the quantum  Lov\'asz number is strictly larger than the entanglement-assisted zero-error capacity. We utilize the one-shot NS-assisted zero-error capacity and simulation cost to determine the asymptotic NS-assisted zero-error capacity in this case, which is potentially larger than the entanglement-assisted zero-error capacity. An interesting fact is that this class of channels are reversible in a strong sense. To be specific, for this class of channels, the one-shot NS-assisted zero-error capacity and simulation cost are identical. We then give a closed formula for the quantum Lov\'asz number for this class of channels, and use it to conclude that there is a strict gap between the quantum Lov\'asz  number and the entanglement-assisted zero-error capacity.  For this class of channels, we also find that the quantum fractional packing number is strictly larger than the feedback-assisted or NS-assisted zero-error capacity, while these three quantities are equal to each other for any classical channel~\cite{Cubitt2011}.

\section{Preliminaries}
In the following, we will frequently use symbols such as $A$ (or $A'$) and $B$ (or $B'$) to denote the (finite-dimensional) Hilbert spaces associated with Alice and Bob, respectively. The set of linear operators over $A$ is denoted by $\cL(A)$.  
A quantum channel $\cN$ from $A$ to $B$ is simply a completely positive and trace-preserving (CPTP) linear map from $\cL(A)$ to $\cL(B)$,
with a Choi-Kraus operator sum representation
$\cN(\rho)=\sum_k E_k\rho E_k^\dag$. where $\sum_k E_k^\dag E_k=\1_{A}$. 
The Choi-Kraus operator space of $\cN$ is denoted by $$K=K(\cN):=\operatorname{span}\{E_k\}.$$ 
Such space is alternatively called ``non-commutative bipartite graph'' since it determines the zero-error capacity of a quantum channel in the presence of noiseless feedback~\cite{Duan2015}, which plays a similar role to the bipartite graph of a classical channel. 
The Choi-Jamio\l{}kowski matrix of $\cN: \cL(A')\to \cL(B)$ is  $J_{AB}=\sum_{ij} \ketbra{i}{j}_A \ox \cN(\ketbra{i}{j}_{A'})=(\text{id}_{A}\ox\cN)\proj{\Phi}$, where $A$ and $A'$ are isomorphic Hilbert spaces with respective orthonormal basis $\{\ket i_A\}$ and $\{\ket j_{A'}\}$, $\ket \Phi =\sum_i \ket{i}_A\ket{i}_{A'}$ and $\text{id}_{A}$ is the identity map. We denote $P_{AB}$  as the projection onto the support of $J_{AB}$, which is the subspace $(\1\ox K)\ket{\Phi}$.
The non-commutative graph~\cite{Duan2013} of $\cN$ is defined by the operator subspace 
$$S:=K^\dagger K=\text{span}\{E_j^\dagger E_k: j, k\}<\cL(A'),$$ 
where $S<\cL(A')$ means that $S$ is a subspace of $\cL(A')$. 

The one-shot zero-error capacity of a quantum channel $\cN$ is the maximum number of inputs such that the receiver can perfectly distinguish the corresponding output states. The output states can be perfectly distinguished if and only if they are orthogonal.
This one-shot zero-error capacity can be equivalently defined as  the  independence number $\a(S)$ of the non-commutative graph~\cite{Duan2013} of $\cN$, i.e.,  the maximum size of a set of orthogonal unit vectors $\{\ket{\phi_m}: m=1, ... , M \}$ such that $$\forall m\neq m',  \ketbra{\phi_m}{\phi_m'}\in S^{\perp}.$$ 
The zero-error capacity is given by regularization of $\a(S)$, i.e.,
\begin{equation}\label{C0}
C_0(\cN)=C_0(S)=\sup_{n \to \infty}\frac{1}{n}\log\a(S^{\ox n}).
\end{equation}
Throughout this paper, $\log$ denotes the binary logarithm $\log_2$. The $\sup$ in Eq. (\ref{C0}) can be replaced by $\lim$ based on  the lemma about existence of limits in~\cite{Barnum1998}.

The entanglement-assisted independence number $\widetilde \a (S)$~\cite{Duan2013} is motivated by the scenario where sender and receiver share entangled state beforehand and it quantifies the maximum number of distinguishable messages that can be sent via the channel $\cN$ with graph $S$ when shared entanglement is free. To be specific, $\widetilde \a (S)$ is the maximum integer $M$ such that there exist Hilbert spaces $A_0, B_0$ and a state $\sigma\in \cL(A_0\ox B_0)$, and CPTP maps $\cE_m: \cL(A_0) \to \cL(A)  (m=1, ... , N)$ such that  the $N$ output states   $\rho_m=(\cN \circ \cE_m \ox \text{id}_{B_0})\sigma$  are orthogonal.
The entanglement-assisted  zero-error capacity of $S$ is given by regularization of $\widetilde\a(S)$, i.e.,
\begin{equation}
C_{\rm{0E}}(\cN)=C_{\rm{0E}}(S)=\sup_{n \to \infty}\frac{1}{n}\log\widetilde\a(S^{\ox n}).
\end{equation}

For any non-commutative graph $S < \mathcal{L}(A)$, the quantum Lov\'asz number $\cvartheta(S)$ was introduced as
 a quantum analog of the Lov\'asz  number in~\cite{Duan2013}.  It can be formalized by semidefinite programming (SDP)~\cite{Duan2013} as follows: 
\begin{align}
  \label{eq:c-theta-SDP}
  \cvartheta(S) &= \max\ \bra{\Phi} (\1\ox\rho + T) \ket{\Phi} \\
          & \phantom{==}\text{s.t. }\ T \in S^\perp\ox\mathcal{L}(A'), \quad \tr\rho = 1, \nonumber \\
          & \phantom{==\text{s.t. }}\ \1\ox\rho+T \geq 0, \quad \rho \geq 0, \nonumber
\end{align}
where $\ket \Phi=\sum_{i}\ket {i}_{A}\ket{i}_{A'}$. Note that
SDP can be solved by polynomial-time algorithms~\cite{Vandenberghe1996, Khachiyan1980} in usual and it has many other applications in quantum information theory (e.g.,~\cite{Rains2001,Wang2016a,Barnum2003,Wang2016d,Eldar2003,Li2017,Watrous2012,Regev2013}).
More details about SDP can be found in~\cite{Watrous2011b}.
The dual  SDP of  $\cvartheta(S)$ is given by
\begin{equation}\begin{split}
  \label{eq:c-theta-SDP-dual}
  \cvartheta(S) &= \min\ \| \tr_A Y \|_\infty \\
                & \phantom{==}\text{s.t. }\ Y \in S \ox \mathcal{L}(A'), \quad
                                            Y \geq \proj\Phi.
\end{split}\end{equation}
The operator norm $\|R\|_\infty$ is defined as the maximum eigenvalue of $\sqrt{R^\dagger R}$.
By strong duality, the optimal values of the primal and dual SDPs of  $\cvartheta(S)$ coincide.  
Furthermore,   $\cvartheta(S)$ was proved to be an upper bound of $C_{\rm{0E}}(S)$~\cite{Duan2013},
\begin{equation}
C_0(S)\le C_{\rm{0E}}(S) \le \log\cvartheta(S).
\end{equation}
Moreover, for a quantum channel $\cN$ with non-commutative graph $S$, the quantum Lov\'asz number of $\mathcal{N}$ is naturally given by the quantum Lov\'asz number of $S$,
$$\cvartheta(\cN)=\cvartheta(S).$$

The no-signalling correlations arises in the research of the relativistic causality of quantum operations~\cite{Beckman2001, Eggeling2002a, Piani2006, Oreshkov2012} and Cubitt et al.~\cite{Cubitt2011} first introduced classical no-signalling correlations into the zero-error communication via classical channels and proved that the fractional packing number of the bipartite graph induced by the channel equals to the zero-error capacity of the channel.  Recently,  quantum no-signalling correlations were introduced  into the zero-error communication via quantum channels in~\cite{Duan2016} and the one-shot NS-assisted zero-error classical capability (quantified as the number of messages) was formulated as the following SDP:
\begin{equation}\begin{split}\label{eq:Upsilon}
\U(\cN)=\U(K) = \max &\tr R_A \\
\text{ s.t. }&  0 \leq U_{AB} \leq R_A \ox \1_B, \\
        & \tr_A U_{AB} = \1_B, \\
        & \tr P_{AB}(R_A\ox\1_B-U_{AB}) = 0,
\end{split}\end{equation}
where $P_{AB}$ denotes the projection onto  $(\1\ox K)\ket{\Phi}$. The asymptotic NS-assisted zero-error capacity  is given by the regularization:
\begin{equation}
C_{0,\rm{NS}}(\cN)=C_{0,\rm{NS}}(K)=\sup_{n \to \infty}\frac{1}{n}\log\U(K^{\ox n}).
\end{equation}
A remarkable feature of NS-assisted zero-error capacity is that one bit noiseless communication can fully activate any classical-quantum channel to achieve its asymptotic capacity~\cite{Duan2015a}.

The zero-error simulation cost of a quantum channel in the presence of quantum no-signalling correlations was introduced  in~\cite{Duan2016} and  formalized as SDPs. To be specific, for the quantum channel $\cN$ with Choi-Jamio\l{}kowski matrix $J_{AB}$, the NS-assisted zero-error simulation cost of $\cN$ is given by
\begin{equation}
S_{\rm{0, NS}}(\cN)=-H_{\min}(A|B)_{J_{AB}}:=\log \S(\cN),
\end{equation}
where
\begin{equation}\begin{split}\label{SC SDP}
 \S(\cN)= \min& \tr T_B,\\
  {\rm s.t.} &\ J_{AB}\leq \1_A\ox T_B,
\end{split}\end{equation}
 and $H_{\min}(A|B)_{J_{AB}}$ is the so-called conditional  min-entropy~\cite{Konig2009,Tomamichel2012}.
By the fact that the conditional  min-entropy is additive~\cite{Konig2009}, the asymptotic NS-assisted zero-error simulation cost  is given by 
\begin{equation}
S_{\rm{0,NS}}(\cN)=\log \S(\cN).
\end{equation}

Furthermore, noting that the NS assistance is stronger than the entanglement assistance, the capacities and simulation cost of a quantum channel introduced above obey the following inequality:
\begin{equation}\label{simulation}
C_{0}\le C_{\rm{0E}}\le C_{\rm{0, NS}}\le C_{\rm E}\le S_{\rm{0, NS}},
\end{equation}
where $C_{\rm E}$ is the entanglement-assisted classical capacity~\cite{Bennett2002}.

\section{Gap between quantum Lov\'asz number and entanglement-assisted zero-error capacity}
In this section, we are going to show the gap between the quantum Lov\'asz number and the entanglement-assisted zero-error capacity.
The difficulty in comparing $C_{\rm{0E}}$ and the quantum Lov\'asz number is that there are few channels whose entanglement-assisted zero-error capacity is known. In fact,  $C_{\rm{0E}}$ is even not known to be computable.  The problem whether there exists a gap between them was a prominent open problem in the area of zero-error quantum information theory.

Our approach to the above problem is to construct a particular class of channels and considering the NS-assisted zero-error capacity, which is potentially larger than the entanglement-assisted case. 
To be specific, the  class of channels we use is $\cN_{\alpha}(\rho)=E_\alpha\rho E_\alpha^{\dagger}+D_\alpha\rho D_\alpha^{\dagger}$ $(0<\alpha\le \pi/4)$ with 
\begin{align*}
E_\alpha = \sin \alpha\ketbra{0}{1}+\ketbra{1}{2},\\
D_\alpha=\cos\alpha\ketbra{2}{1}+\ketbra{1}{0}.
\end{align*}
This qutrit-qutrit channel $\cN_\a$ is motivated in the similar sipirt of the amplitutde damping channel, which exhibits a significant differnece from the classical-quantum channels.
 
The first Choi-Kraus operator $E_\a$ annihilates the ground state $\proj 0$:
$$E_\a \proj 0 E_\a^\dagger=0,$$ and it decays the state $\proj 1$ to the ground state $\proj 0$: 
$$E_\a \proj 1 E_\a^\dagger=\sin^2\a\proj 0.$$ Meanwhile, $E_\a$ also transfer the state $\proj 2$ to $\proj 1$, i.e.,  $E_\a \proj 2 E_\a^\dagger=\proj 1$. 
On the other hand, the choice of $D_\a$ above ensures that 
$$E_\a^\dagger E_\a+ D_\a^\dagger D_\a=\1,$$
which means that the operators $E_\a$ and $D_\a$ are valid Kraus operators for a quantum channel.

The Choi-Jamio\l{}kowski matrix of $\cN_{\alpha}$ is given by
\begin{align*}
J_\alpha=(1+\sin^2\alpha)\proj{u_\alpha}+(1+\cos^2\alpha)\proj{v_\alpha},
\end{align*}
where 
\begin{align}
\ket {u_\alpha}=\frac{\sin\alpha}{\sqrt{1+\sin^2\alpha}}\ket {10}+\frac{1}{\sqrt{1+\sin^2\alpha}}\ket {21},\\
 \ket{v_\alpha}=\frac{\cos\alpha}{\sqrt{1+\cos^2\alpha}}\ket {12}+\frac{1}{\sqrt{1+\cos^2\alpha}}\ket {01}.
\end{align}
Then, the projection onto  the support of $J_\alpha$ is
\begin{equation}\begin{split}
P_\alpha=\proj{u_\alpha}+\proj{v_\alpha}.
\end{split}\end{equation}

We first prove that both NS-assisted zero-error capacity and simulation cost of $\cN_\a$ are exactly two bits.
\begin{proposition}\label{rev N}
For the channel $\cN_{\alpha}$ $(0< \alpha \le \pi/4)$, 
\begin{equation}
C_{0,\rm{NS}}(\cN_{\alpha})=C_{\rm E}(\cN_\a)= S_{\rm{0,NS}}(\cN_{\alpha})=2.
\end{equation}	
\end{proposition}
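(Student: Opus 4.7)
The plan is to sandwich the three quantities between two explicit SDP values. By the chain of inequalities (\ref{simulation}), $C_{0,\mathrm{NS}}(\cN_\alpha)\le C_{\mathrm{E}}(\cN_\alpha)\le S_{0,\mathrm{NS}}(\cN_\alpha)$, so it suffices to show $\U(\cN_\alpha)\ge 4$ (giving $C_{0,\mathrm{NS}}\ge 2$) and $\S(\cN_\alpha)\le 4$ (giving $S_{0,\mathrm{NS}}\le 2$). The key structural observation is that $J_\alpha$, and hence $P_\alpha$, decomposes in the computational basis $\{\ket{ab}\}$ as a direct sum of two rank-one $2\times 2$ blocks on $\text{span}\{\ket{10},\ket{21}\}$ and $\text{span}\{\ket{01},\ket{12}\}$ plus a diagonal remainder on $\{\ket{00},\ket{02},\ket{11},\ket{20},\ket{22}\}$, and every SDP matrix inherits this $2\oplus 2\oplus 5$ block structure.

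First I would bound $\S$ from above. Using a diagonal ansatz $T_B=t_0\proj 0+t_1\proj 1+t_2\proj 2$ in (\ref{SC SDP}), the positivity constraint $\1_A\otimes T_B\ge J_\alpha$ reduces on the two nontrivial blocks to the determinant inequalities $(t_0-\sin^2\alpha)(t_1-1)\ge\sin^2\alpha$ and $(t_1-1)(t_2-\cos^2\alpha)\ge\cos^2\alpha$. Introducing $s=t_1-1>0$ and saturating the tight lower bounds on $t_0$ and $t_2$ turns the objective into $\tr T_B = 2+s+s^{-1}$, minimized by AM--GM at $s=1$. This gives $\S(\cN_\alpha)\le 4$, attained at $T_B^\star = 2\sin^2\alpha\proj 0+2\proj 1+2\cos^2\alpha\proj 2$.

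Next I would bound $\U$ from below by constructing a feasible point $(R_A,U_{AB})$ for (\ref{eq:Upsilon}) with $\tr R_A=4$. Guided by the $\sin\alpha\leftrightarrow\cos\alpha$ duality between the input and output amplitudes of $\cN_\alpha$, the natural candidate is the ``swapped'' diagonal ansatz $R_A^\star = 2\cos^2\alpha\proj 0+2\proj 1+2\sin^2\alpha\proj 2$. Writing $V_{AB}=R_A^\star\otimes\1_B-U_{AB}$, the complementary-slackness condition $\tr P_\alpha V_{AB}=0$ combined with $V_{AB}\ge 0$ forces $V_{AB}\ket{u_\alpha}=V_{AB}\ket{v_\alpha}=0$. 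I would then take $V_{AB}$ itself block-diagonal in the same $2\oplus 2\oplus 5$ decomposition: a positive multiple of $\proj{u_\alpha^\perp}$ on the first block, a positive multiple of $\proj{v_\alpha^\perp}$ on the second, and diagonal entries equal to those of $R_A^\star\otimes\1_B$ on the third. A short computation confirms that $V_{AB}\ge 0$, $V_{AB}\le R_A^\star\otimes\1_B$ (the complement is rank-one PSD on each nontrivial block), and $\tr_A V_{AB}=3\,\1_B$, so that $\tr_A U_{AB}=\1_B$ as required. This witnesses $\U(\cN_\alpha)\ge 4$.

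Combining the two bounds with (\ref{simulation}) yields $2\le C_{0,\mathrm{NS}}(\cN_\alpha)\le C_{\mathrm{E}}(\cN_\alpha)\le S_{0,\mathrm{NS}}(\cN_\alpha)\le 2$, proving the proposition. I expect the main obstacle to be the design step rather than the verification: the correct weighting in $R_A^\star$ is not obtained from $T_B^\star$ by a naive input/output transposition (the $\sin^2$ and $\cos^2$ coefficients swap sides), and the rank-one directions $\ket{u_\alpha^\perp}$, $\ket{v_\alpha^\perp}$ for $V_{AB}$ must be chosen so that partial tracing returns exactly $3\,\1_B$. Once those two guesses are made, feasibility reduces to a handful of $2\times 2$ positivity checks that go through uniformly in $\alpha\in(0,\pi/4]$.
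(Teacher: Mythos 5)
Your proposal is correct and follows essentially the same route as the paper: the same sandwich via the chain $C_{0,\rm{NS}}\le C_{\rm E}\le S_{\rm{0,NS}}$, the same witness $T_B=2\sin^2\alpha\proj{0}+2\proj{1}+2\cos^2\alpha\proj{2}$ for $\S(\cN_\alpha)\le 4$, and the same $R_A=2\cos^2\alpha\proj{0}+2\proj{1}+2\sin^2\alpha\proj{2}$ for $\U(\cN_\alpha)\ge 4$ (your block-diagonal $V_{AB}$ with the rank-one pieces $(1+\sin^2\alpha)\proj{u_\alpha^\perp}$ and $(1+\cos^2\alpha)\proj{v_\alpha^\perp}$ reproduces exactly the paper's explicit $U_{AB}$). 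The only difference is presentational: you derive the witnesses from the $2\oplus2\oplus5$ block structure and complementary slackness, whereas the paper simply states them and verifies feasibility.
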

\begin{proof}
First, we show that Alice can trasmit at least 2 bits prefectly to Bob with a single use of $\cN_\a$ and the NS-assistance. The approach is to construct a feasible solution of the SDP~\eqref{eq:Upsilon} of the one-shot NS-assisted zero-error capacity. To be specific, suppose that $R_A=2(\cos^2\alpha\proj{0}+\proj{1}+\sin^2\alpha\proj {2})$ and
\begin{align*}
U_{AB}=&\cos^2\alpha\proj{01}+\sin^2\a\proj{21}+\proj{10}+\proj{12}\\
&+\sin\alpha(\ketbra{10}{21}+\ketbra{21}{10})+\cos\alpha(\ketbra{01}{12}+\ketbra{12}{01}).
\end{align*}
One can simply check  that $R_{A}\ox \1_B- U_{AB}\ge 0$, $\tr_A U_{AB}=\1_B$ and $P_\a (R_{A}\ox \1_B- U_{AB})=0$.
Therefore, $\{R_A, U_{AB}\}$ is a feasible solution to SDP (\ref{eq:Upsilon}) of $\U(\cN_\alpha)$, which means that 
 \begin{equation}\label{C N alpha}
 C_{0,\rm{NS}}(\cN_\alpha)\ge\log\U(\cN_\alpha)\ge \log \tr R_A=2.
\end{equation}

Second, we prove that the one-shot NS-assisted simulation cost of $\cN_\a$ is at amost 2 bits. We utilize the SDP \eqref{SC SDP} of  one-shot NS-assisted simulation cost  and choose
 \begin{equation}
 T_B=2(\sin^2\alpha\proj{0}+\proj{1}+\cos^2\alpha\proj {2}).
 \end{equation}
It can be checked that
$\1\ox T_B - J_\alpha \ge 0$. Thus, $T_B$ is a feasible solution to SDP (\ref{SC SDP}) of $\S(\cN_\alpha)$, which means that 
\begin{equation}\label{S N alpha}
S_{\rm{0,NS}}(\cN_\alpha)\le\log\S(\cN_\alpha)\le \log\tr T_B=2.
\end{equation}

Finally, combining Eq. (\ref{C N alpha}), Eq. (\ref{S N alpha}) and Eq. (\ref{simulation}), it is clear that 
\begin{equation}
C_{0,\rm{NS}}(\cN_{\alpha})=C_{\rm E}(\cN_\a)=S_{\rm{0,NS}}(\cN_{\alpha})=2.
\end{equation}
\end{proof}

We then solve the exact value of the quantum Lov\'asz number  of $\cN_\a$.
\begin{proposition}\label{theta N}
 For the channel $\cN_{\alpha}$ $(0< \alpha \le \pi/4)$,
\begin{equation}
\cvartheta(\cN_{\alpha})=2+\cos^2\alpha+{\cos^{-2}\alpha}>4.
\end{equation}
\end{proposition}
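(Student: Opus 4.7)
The plan is to compute $\cvartheta(\cN_\alpha)$ exactly by exhibiting matching feasible solutions to the primal SDP~\eqref{eq:c-theta-SDP} and dual SDP~\eqref{eq:c-theta-SDP-dual}. The preparatory step is to make $S=K^{\dagger}K$ explicit: a direct computation of the four products $E_\alpha^{\dagger} E_\alpha$, $D_\alpha^{\dagger} D_\alpha$, $E_\alpha^{\dagger} D_\alpha$, $D_\alpha^{\dagger} E_\alpha$ shows that $S=\mathrm{span}\{P,Q,|0\rangle\langle 2|,|2\rangle\langle 0|\}$ with $P:=|0\rangle\langle 0|+\cos^2\alpha\,|1\rangle\langle 1|$ and $Q:=\sin^2\alpha\,|1\rangle\langle 1|+|2\rangle\langle 2|$ (so $\1=P+Q\in S$). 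One then reads off that $S^\perp$ contains the diagonal operator $X_d:=-\cos^2\alpha\,|0\rangle\langle 0|+|1\rangle\langle 1|-\sin^2\alpha\,|2\rangle\langle 2|$ together with every $|i\rangle\langle j|$ for $(i,j)\in\{(0,1),(1,0),(1,2),(2,1)\}$.

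For the dual upper bound the idea is to split the maximally entangled vector as $|\Phi\rangle=|\Phi_1\rangle+|\Phi_2\rangle$ with $|\Phi_1\rangle=|11\rangle$ and $|\Phi_2\rangle=|00\rangle+|22\rangle$, and apply the operator Cauchy--Schwarz-type bound
\[
|\Phi\rangle\langle\Phi|\ \le\ (1+\lambda)\,|\Phi_1\rangle\langle\Phi_1|+(1+1/\lambda)\,|\Phi_2\rangle\langle\Phi_2|,\qquad\lambda>0.
\]
Concrete dominators in $S\otimes\cL(A')$ can then be built block-by-block in $A'$: take $Y_1:=(P/\cos^2\alpha)\otimes|1\rangle\langle 1|$ for $|\Phi_1\rangle\langle\Phi_1|$, and $Y_2:=P\otimes|0\rangle\langle 0|+Q\otimes|2\rangle\langle 2|+|0\rangle\langle 2|\otimes|0\rangle\langle 2|+|2\rangle\langle 0|\otimes|2\rangle\langle 0|$ for $|\Phi_2\rangle\langle\Phi_2|$ (the residual being the manifestly positive $\cos^2\alpha\,|10\rangle\langle 10|+\sin^2\alpha\,|12\rangle\langle 12|$). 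The combination $Y:=(1+\lambda)Y_1+(1+1/\lambda)Y_2$ is then dual-feasible, and balancing the two largest eigenvalues of $\tr_A Y$ at $\lambda=\cos^2\alpha$ yields $\|\tr_A Y\|_\infty=(1+\cos^2\alpha)(1+\cos^{-2}\alpha)=2+\cos^2\alpha+\cos^{-2}\alpha$; the hypothesis $\alpha\le\pi/4$ is exactly what ensures the $|2\rangle\langle 2|$-eigenvalue $(1+\cos^{-2}\alpha)(1+\sin^2\alpha)$ does not exceed this.

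For the primal lower bound I would take
\[
\rho:=\tfrac{\cos^2\alpha}{1+\cos^2\alpha}\,|0\rangle\langle 0|+\tfrac{1}{1+\cos^2\alpha}\,|1\rangle\langle 1|,\quad T:=X_d\otimes Y_0+|0\rangle\langle 1|\otimes|0\rangle\langle 1|+|1\rangle\langle 0|\otimes|1\rangle\langle 0|,
\]
with $Y_0:=-\tfrac{\cos^2\alpha}{1+\cos^2\alpha}\,|0\rangle\langle 0|+\tfrac{1}{\cos^2\alpha(1+\cos^2\alpha)}\,|1\rangle\langle 1|$. Membership $T\in S^\perp\otimes\cL(A')$ follows immediately from the basis description of $S^\perp$ above, and a short calculation confirms $\langle\Phi|(\1\otimes\rho+T)|\Phi\rangle=1+(\cos^2\alpha-1+\cos^{-2}\alpha)+2=2+\cos^2\alpha+\cos^{-2}\alpha$.

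The main obstacle is verifying $\1\otimes\rho+T\ge 0$. In the product basis the piece $\1\otimes\rho+X_d\otimes Y_0$ is diagonal, so the only non-diagonal contribution comes from the two rank-one terms in $T$, coupling $|00\rangle$ and $|11\rangle$ with matrix element $1$. At the chosen parameters the resulting $2\times 2$ block on $\mathrm{span}\{|00\rangle,|11\rangle\}$ has diagonal entries $\cos^2\alpha$ and $\cos^{-2}\alpha$ together with off-diagonal $1$, hence sits exactly on the PSD boundary (determinant $0$); the remaining diagonal entries are non-negative precisely because $\alpha\le\pi/4$, which in particular guarantees $(\cos^2\alpha-\sin^2\alpha)/\bigl(\cos^2\alpha(1+\cos^2\alpha)\bigr)\ge 0$ at the $|21\rangle$-slot. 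The matching primal and dual values then give $\cvartheta(\cN_\alpha)=2+\cos^2\alpha+\cos^{-2}\alpha$, and the strict inequality $>4$ is just AM--GM: $\cos^2\alpha+\cos^{-2}\alpha>2$ whenever $\alpha\in(0,\pi/4]$.
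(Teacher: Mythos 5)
Your proof is correct and takes essentially the same approach as the paper: both exhibit matching feasible points for the primal and dual SDPs, and your primal certificate $(\rho,T)$ is literally the paper's (your $X_d\otimes Y_0$ equals the paper's $T_1\otimes T_2$), while your dual certificate agrees with the paper's up to the harmless positive term $(2-\cos^{-2}\alpha)\,(\proj{0}+\cos^2\alpha\proj{1})\otimes\proj{2}$, which the paper includes only to make $\tr_A Y$ exactly proportional to $\1_B$. Your derivation of the dual point via the splitting $\ket{\Phi}=\ket{11}+(\ket{00}+\ket{22})$ and the weighted Cauchy--Schwarz inequality is a more systematic way to arrive at the same certificate, and you correctly isolate the two places where $\alpha\le\pi/4$ is needed (the $\ket{21}$ diagonal entry in the primal and the $\proj{2}$ eigenvalue of $\tr_A Y$ in the dual).
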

\begin{proof}
We first construct a quantum state $\rho$ and an operator $T\in S^\perp\ox\mathcal{L}(A')$ such that $\1\ox\rho+T$ is positive semidefinite. Then, we use the  primal SDP~\eqref{eq:c-theta-SDP} of $\cvartheta(\cN_\alpha)$ to obtain the lower bound of  $\cvartheta(\cN_\alpha)$. 

To be specific, the non-commutative graph of $\cN_\alpha$ is $S=\text{span}\{F_1, F_2, F_3, F_4\}$ with
\begin{align}
F_1&=\proj{0}+\cos^2\a\proj{1},\\
F_2&=\sin^2\a\proj{1}+\proj{2},\\
F_3&=\ketbra{0}{2} \text{ and } F_4=\ketbra{2}{0}.\\
\end{align}
Let us choose
\begin{equation}
\rho=\frac{\cos^2\a}{1+\cos^2\a}\proj{0}+\frac{1}{1+\cos^2\a}\proj{1}
\end{equation}
and
$T=T_1\ox T_2+R$,
where 
\begin{align}
T_1&=\frac{1}{1+\cos^2\a}(\proj{0}-\frac{1}{\cos^2\a}\proj 1+\frac{\sin^2\a}{\cos^2\a}\proj{2}),  \\
T_2&=\cos^4\a\proj 0-\proj{1}, \\
 R&=\ketbra{00}{11}+\ketbra{11}{00}.
\end{align}
It is clear that $\rho\ge 0$ and $\tr \rho =1$.  Also, it is easy to see that for any matrix $M\in\mathcal{L}(A')$ and $j=1,2,3,4$, 
\begin{equation}
\tr R(F_j\ox M)=0.
\end{equation}
Meanwhile, noticing that $\tr(T_1F_j)=0$ for $j=1,2,3,4$,
we have
\begin{equation}
T=T_1\ox T_2+R \in S^\perp\ox\mathcal{L}(A').
\end{equation}
Moreover, it is easy to see that 
\begin{equation}\begin{split}
\1\ox\rho+T=&\cos^2\a\proj{00}+\frac{1}{\cos^2\a}\proj{11}+\ketbra{00}{11}\\
&+\ketbra{11}{00}+\frac{\cos^2\a-\cos^4\a}{1+\cos^2\a}\proj{20}\\
&+\frac{2\cos^2-1}{(1+\cos^2\a)\cos^2\a}\proj{21}\ge 0.
\end{split}\end{equation}
Then, $\{\rho, T\}$ is a feasible solution to primal SDP (\ref{eq:c-theta-SDP}) of $\cvartheta(\cN_\alpha)$. Hence, we have that 
\begin{equation}\label{theta lower bound}
\begin{split}
\cvartheta(\cN_\a)&\ge \tr[\proj{\Phi} (\1\ox\rho + T)]\\
&= \tr[\proj{\Phi} (\1\ox\rho + T_1\ox T_2+R)]\\
&=2+\cos^2\alpha+{\cos^{-2}\alpha}.
\end{split}\end{equation}

On the other hand, we find a feasible solution to the dual SDP (\ref{eq:c-theta-SDP-dual}) of $\cvartheta(\cN_\alpha)$. It is easy to see that 
\begin{equation}
S^\perp =\rm span\{M_1, M_2, M_3, M_4, M_5 \},
\end{equation}
 where
$M_1=\ketbra{0}{1}$, $M_2=\ketbra{1}{0}$, $M_3=\ketbra{1}{2}$, $M_4=\ketbra{2}{1}$ and $M_5=\ketbra{0}{0}-\cos^{-2}\a\proj{1}+\tan^2\a\proj 2$.
Let us choose
\begin{equation}
Y=Y_1 \ox(\proj{0}+\proj{1})+Y_2\ox \proj{2}+\frac{1+\cos^2\alpha}{\cos^2\alpha}Y_3
\end{equation}
with
\begin{align} 
Y_1=&(1+\cos^2\alpha)\cos^{-2}\alpha\proj{0}+(1+\cos^2\a)\proj 1,\\
Y_2=&(2-{\cos^{-2}\a})\proj{0}+({\cos^{-2}\a}-\sin^2\a)\proj{1}\\
&+(1+\cos^2\a){\cos^{-2}\a}\proj2,\\
Y_3=&\ketbra{00}{22}+\ketbra{22}{00}.
\end{align}
It is easy to see that for any matrix $V\in\mathcal{L}(A')$ and $j=1,2,3,4,5$, we have that
\begin{equation}
\tr Y_3(M_j\ox V)=0.
\end{equation}
Meanwhile, since $\tr(Y_kM_j)=0$ for $k=1,2$ and $j=1,2,3,4,5$, 
we have that 
\begin{align*}
Y&=Y_1 \ox (\proj{0}+\proj{1})+Y_2\ox \proj{2}+\frac{1+\cos^2\alpha}{\cos^2\alpha}Y_3 \\
&\in S\ox\mathcal{L}(A').
\end{align*}
It is also easy to check that $Y-\proj{\Phi}\ge 0$. Thus, $Y$ is a feasible solution to SDP (\ref{eq:c-theta-SDP-dual}) of  $\cvartheta(\cN_\alpha)$. 
Furthermore, one can simply  calculate that
\begin{equation}
\tr_A Y=(2+\cos^2\a+\cos^{-2}\a)\1_B,
\end{equation}
Therefore, 
 \begin{equation}\label{theta upper bound}
 \cvartheta(\cN_\alpha)\le \| \tr_A Y \|_\infty= 2+\cos^2\a+\cos^{-2}\a.
 \end{equation}

 Finally, combining Eq. (\ref{theta lower bound}) and Eq. (\ref{theta upper bound}), we can conclude that
 $$\cvartheta(\cN_\alpha)=2+\cos^2\a+\cos^{-2}\a.$$
\end{proof}

Now we are able to show a separation between $\log\cvartheta(\cN_\a)$ and  $C_{\rm{0E}}(\cN_\a)$.
\begin{theorem}
For the channel $\cN_\a$  $(0< \alpha \le \pi/4)$, the quantum Lov\'asz number is strictly larger than the
 entanglement-assisted zero-error capacity (or even with no-signalling assistance), i.e.,
\begin{equation}
\log\cvartheta(\cN_\a) >C_{0,\rm{NS}}(\cN_\a)\ge C_{\rm{0E}}(\cN_\a).
\end{equation}

\end{theorem}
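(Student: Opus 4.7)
The plan is to combine the two preceding propositions and reduce the theorem to a single elementary inequality. Proposition \ref{rev N} gives $C_{0,\rm{NS}}(\cN_\a) = 2$ and Proposition \ref{theta N} gives $\cvartheta(\cN_\a) = 2 + \cos^2\alpha + \cos^{-2}\alpha$, so the only thing that remains is to check
\[
\log\bigl(2 + \cos^2\alpha + \cos^{-2}\alpha\bigr) > 2,
\]
or equivalently $\cos^2\alpha + \cos^{-2}\alpha > 2$, throughout the range $0 < \alpha \le \pi/4$.

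I would establish this last inequality by a one-line AM-GM argument: for any positive real $x$, one has $x + x^{-1} \ge 2$ with equality iff $x = 1$. Setting $x = \cos^2\alpha$, strict inequality holds exactly when $\cos^2\alpha \ne 1$, i.e.\ $\alpha \ne 0$, which is guaranteed by the hypothesis $0 < \alpha \le \pi/4$. Consequently $\log\cvartheta(\cN_\a) > 2$.

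To finish, I would invoke the hierarchy \eqref{simulation} from the preliminaries, which gives $C_{\rm{0E}}(\cN_\a) \le C_{0,\rm{NS}}(\cN_\a) = 2 < \log\cvartheta(\cN_\a)$, yielding the desired strict separation. No serious obstacle is foreseen here: all the substance of the separation lives in the two preceding propositions (especially the primal/dual SDP constructions that pin down $\cvartheta(\cN_\a)$), and the present theorem is just a short bookkeeping step together with a trivial algebraic inequality. One worthwhile sanity remark is that the gap $\cos^2\alpha + \cos^{-2}\alpha - 2$ vanishes as $\alpha \to 0$, so excluding the boundary $\alpha = 0$ in the hypothesis is essential for strictness, and the gap is maximized at $\alpha = \pi/4$ within the allowed range.
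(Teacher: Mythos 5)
Your proposal is correct and matches the paper's proof: both simply combine Proposition \ref{rev N} and Proposition \ref{theta N} with the hierarchy \eqref{simulation}, the only content being the strict inequality $\cos^2\alpha+\cos^{-2}\alpha>2$ for $0<\alpha\le\pi/4$. Your AM--GM justification of that inequality is a slightly more explicit version of what the paper already asserts in Proposition \ref{theta N} (the ``$>4$'' in its statement), so there is nothing genuinely different here.
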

\begin{proof}
It is easy to see this result from Proposition \ref{rev N} and Proposition \ref{theta N}. To be specific, we have
\begin{align}
\log\cvartheta(\cN_\a)&=\log(2+\cos^2\a+\cos^{-2}\a)\\
&>2
=C_{0,\rm{NS}}(\cN_\a)\\
&\ge  C_{\rm{0E}}(\cN_\a).
\end{align}
\end{proof}

\section{Gap between quantum fractional packing number and feedback-assisted or NS-assisted zero-error capacity}
 A classical channel $\cN=(X,p(y|x),Y)$ naturally induces a bipartite graph $\Gamma(\cN)=(X,E,Y)$, where $X$ and $Y$ are the input and output alphabets, respectively.  And $E\subset X\times Y$ is the set of edges such that $(x,y)\in E$ if and only if the probability $p(y|x)$ is positive.
The non-commutative bipartite graph in this case is given by
 $$K = \text{span}\{\ketbra{y}{x} : (x, y) \text{ is an edge in } \Gamma \}.$$
 
Shannon first introduced the feedback-assisted zero-error capacity~\cite{Shannon1956}.
To be precise, his model has noiseless instantaneous feedback of the channel output back to the sender, and it requires some arbitrarily small rate of forward noiseless communication. For any classical channel with a positive zero-error capacity, he showed that the feedback-assisted zero-error capacity $C_{\rm{0F}}$ of a classical channel $\cN$ is given by the fractional packing number of its bipartite graph~\cite{Shannon1956}:
 $${\alpha ^*}(\G) = \max \sum\limits_x {v_x} \ \text{  s.t. }\sum\limits_x {v_x {p(y|x)}  }  \le 1 \forall y, 0\le v_x \le 1 \forall x.$$
For any classical bipartite graph, the fractional packing number also gives the NS-assisted zero-error classical capacity and simulation cost~\cite{Cubitt2011}, i.e.,
$$C_{0,\rm{NS}}(K)=S_{\rm{0,NS}}(K)=\log\a^*(\G).$$

The quantum generalization of  fractional packing number in \cite{Duan2016} was suggested by Harrow as \begin{equation}\begin{split}
  \label{eq:Aram}
  \Aram(K) &= \max\tr R_A  
\text{ s.t. }  0 \leq R_A,
  \tr_A P_{AB}(R_A\ox\1_B) \leq \1_B, \\
&= \min\tr T_B  \text{ s.t. } 0 \leq T_B,
  \tr_B P_{AB}(\1_A\ox T_B) \geq \1_A. \\
\end{split}\end{equation}
This quantum fractional packing number $\Aram(K)$ has nice mathematical properties such as additivity under tensor product~\cite{Duan2016}. 

For any bipartite graph $\G$, quantum fractional packing number also reduces to the fractional packing number, i.e.,
\begin{equation}
\Aram(K)=\a^*(\G).
\end{equation}
Furthermore, for a classical-quantum channel with non-commutative bipartite graph $K$, it also holds that~\cite{Duan2016}
\begin{equation}
C_{0,\rm{NS}}(K)=\log \Aram(K).
\end{equation}

However, if we consider general quantum channels, this quantum fractional packing number will exceed the NS-assisted zero-error capacity as well as the feedback-assisted zero-error capacity. An example is the class of channels $\cN_\a$ and the proof is in the following Proposition \ref{aram number gap}. For $\cN_\a$, it is easy to see that
the set of linear operators $\{E_i^\dagger E_j\}$ is linearly independent, which means that
$\cN_\a$ is an extremal channel~\cite{Choi1975}. Thus, its non-commutative bipartite graph $K_\a$ is an extremal graph~\cite{Duan2016}, which means that there can
only be a unique channel $\cN$ such that $K(\cN)=K_\a$.

For a general quantum channel, its feedback-assisted zero-error capacity  depends only on its non-commutative bipartite graph. And the feedback-assisted zero-error capacity is always smaller than or equal to the entanglement-assisted classical capacity~\cite{Duan2015}, i.e., 
\begin{equation}
C_{\rm{0F}}(K) \le C_{\text{minE}}(K),
\end{equation}
where $C_{\text{minE}}(K)$  is defined by
\begin{equation}
C_{\text{minE}}(K):=\min\{C_{\rm E}(\cN): K(\cN)<K\}.
\end{equation}
Considering the fact that $C_{0,\rm{NS}}(K)\le  C_{\text{minE}}(K)\le S_{\rm{0,NS}}(K)$~\cite{Duan2015}, it is easy to see that  $C_{\text{minE}}(K_\a)$ is exactly two bits from Proposition \ref{rev N}.

\begin{lemma}\label{N aram}
 For  non-commutative bipartite graph $K_\a$  $(0<\alpha \le \pi/4)$, the  quantum fractional  packing number is given by
\begin{equation}
\Aram(K_{\alpha})=2+\cos^2\alpha+{\cos^{-2}\alpha}.
\end{equation}
\end{lemma}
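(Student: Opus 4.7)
The plan is to exhibit matching primal and dual feasible points of the SDP (\ref{eq:Aram}) for $\Aram(K_\alpha)$, each achieving the target value $2+\cos^2\alpha+\cos^{-2}\alpha$, and then invoke weak duality. The preliminary step is to exploit the combinatorial structure of $P_\alpha=\proj{u_\alpha}+\proj{v_\alpha}$: the vector $\ket{u_\alpha}$ is supported on the basis elements $\{\ket{10},\ket{21}\}$ while $\ket{v_\alpha}$ is supported on $\{\ket{01},\ket{12}\}$, so their supports are pairwise disjoint, and within each vector the two terms differ in both the $A$-index and the $B$-index. Consequently, for any diagonal operator $X$ on $A$ (respectively, on $B$) the partial trace $\tr_A P_\alpha(X_A\ox\1_B)$ (respectively, $\tr_B P_\alpha(\1_A\ox X_B)$) is itself diagonal, and the corresponding constraint in (\ref{eq:Aram}) reduces to three transparent scalar inequalities in the three diagonal entries.

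With that reduction in hand, I would test the diagonal candidates
$$R_A=(1+\cos^2\alpha)\proj{0}+(1+\cos^{-2}\alpha)\proj{1},\qquad T_B=(1+\cos^2\alpha)\proj{1}+(1+\cos^{-2}\alpha)\proj{2},$$
each of trace exactly $2+\cos^2\alpha+\cos^{-2}\alpha$. Feasibility of $R_A$ in the primal form of (\ref{eq:Aram}) and of $T_B$ in the dual form each unfolds into three scalar inequalities: two of them saturate for every $\alpha$, while the remaining one reduces to $\sin^2\alpha\le\cos^2\alpha$, which is exactly the hypothesis $\alpha\le\pi/4$. Once both feasibilities are verified, weak duality pins $\Aram(K_\alpha)$ to the common trace.

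The mechanical partial-trace calculations are the only real work; the substantive obstacle is just locating the right ansatz. The shape of $R_A$ and $T_B$ is not a coincidence: it mirrors the block structure of the dual Lov\'asz witness $Y$ used in the proof of Proposition \ref{theta N}, which also explains why the numerical answer coincides with $\cvartheta(\cN_\alpha)$. I would not expect off-diagonal candidates to be necessary, since the two diagonal constructions above already match and must therefore both be optimal.
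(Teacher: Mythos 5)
Your proposal is correct and is essentially the paper's own proof: the paper uses the identical primal witness $R_A=(2-\sin^2\alpha)\proj{0}+(1+\cos^{-2}\alpha)\proj{1}$ (note $2-\sin^2\alpha=1+\cos^2\alpha$) and the identical dual witness $T_B$, verifies the diagonal partial-trace constraints, and concludes by matching the two values. The only cosmetic difference is that you make explicit which scalar inequalities saturate and which one uses $\alpha\le\pi/4$; the substance is the same.
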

\begin{proof}
Let us choose $R_A=(2-\sin^2\alpha)\proj{0}+x\proj{1}$, then
\begin{align*}
\tr_A P_\alpha(R_A\ox\1_B)=\frac{x\sin^2\alpha}{1+\sin^2\alpha}\proj{0}+\proj{1}+\frac{x\cos^2\alpha}{1+\cos^2\alpha}\proj{2}.
\end{align*}
When $x=1+\cos^{-2}\alpha$, it is clear that $\tr_A P_\alpha(R_A\ox\1_B)\le \1_B$. Therefore,
$R_A$ is a feasible solution to the primal SDP of $\Aram(\cN_{\alpha})$, which means that 
\begin{equation}
\Aram(\cN_{\alpha})\ge \tr R_A= 2+\cos^2\alpha+{\cos^{-2}\alpha}.
\end{equation}

Similarly, it is easy to check that $T_B=(2-\sin^2\alpha)\proj{1}+({1+\cos^{-2}\alpha})\proj{2}$ is a feasible solution to the dual SDP of $\Aram(\cN_{\alpha})$. Therefore, 
\begin{equation}
\Aram(\cN_{\alpha})\le \tr T_B= 2+\cos^2\alpha+{\cos^{-2}\alpha}.
\end{equation}

Hence, we have that $\Aram(\cN_{\alpha})=2+\cos^2\alpha+{\cos^{-2}\alpha}$.
\end{proof}

Now, we are able to show the separation.
\begin{proposition}\label{aram number gap}
For  non-commutative bipartite graph $K_\a$  $(0<\alpha \le \pi/4)$, we have that
\begin{align} 
{C_{\rm{0F}}(K_\a)} < \log\Aram(K_\a),\\
{C_{0,\rm{NS}}(K_\a)} < \log\Aram(K_\a).
\end{align}
\end{proposition}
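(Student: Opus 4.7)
The plan is to combine the exact value of $\Aram(K_\alpha)$ from Lemma \ref{N aram} with the two-bit upper bounds on $C_{0,\rm NS}$ and $C_{\rm 0F}$ that follow from the extremality of $\cN_\alpha$ together with Proposition \ref{rev N}. The strict separation then reduces to a one-line AM--GM argument on $\cos^2\alpha+\cos^{-2}\alpha$.

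First, I would verify strict inequality on the right-hand side. By Lemma \ref{N aram}, $\log\Aram(K_\alpha)=\log\bigl(2+\cos^2\alpha+\cos^{-2}\alpha\bigr)$. For $0<\alpha\le\pi/4$ we have $\cos^2\alpha\in[1/2,1)$, so by AM--GM,
\begin{equation*}
\cos^2\alpha+\cos^{-2}\alpha>2,
\end{equation*}
with strict inequality since $\cos^2\alpha\ne 1$. Hence $\log\Aram(K_\alpha)>\log 4=2$.

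Second, I would argue that both $C_{\rm 0F}(K_\alpha)$ and $C_{0,\rm NS}(K_\alpha)$ are at most $2$. The key observation, recorded in the paragraph preceding the lemma, is that $\cN_\alpha$ is extremal in the sense of Choi, so the Choi--Kraus products $\{E_i^\dagger E_j\}$ are linearly independent; consequently $\cN_\alpha$ is the unique channel whose non-commutative bipartite graph is contained in $K_\alpha$. This uniqueness identifies $C_{\text{minE}}(K_\alpha)$ with $C_{\rm E}(\cN_\alpha)$ and $C_{0,\rm NS}(K_\alpha),\,S_{\rm 0,NS}(K_\alpha)$ with the corresponding channel quantities. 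Applying Proposition \ref{rev N} gives
\begin{equation*}
C_{0,\rm NS}(K_\alpha)=C_{0,\rm NS}(\cN_\alpha)=2,\qquad S_{\rm 0,NS}(K_\alpha)=S_{\rm 0,NS}(\cN_\alpha)=2.
\end{equation*}
For the feedback-assisted case, the cited inequality $C_{\rm 0F}(K)\le C_{\text{minE}}(K)\le S_{\rm 0,NS}(K)$ then yields $C_{\rm 0F}(K_\alpha)\le 2$.

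Combining the two pieces,
\begin{equation*}
C_{\rm 0F}(K_\alpha)\le 2<\log\Aram(K_\alpha),\qquad C_{0,\rm NS}(K_\alpha)=2<\log\Aram(K_\alpha),
\end{equation*}
which is the claim. The only real conceptual step is invoking extremality to transfer the channel-level bounds of Proposition \ref{rev N} to the graph-level quantities $C_{\rm 0F}(K_\alpha)$ and $C_{0,\rm NS}(K_\alpha)$; everything else is an elementary evaluation and an AM--GM estimate, so I do not foresee any substantive obstacle.
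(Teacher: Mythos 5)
Your proposal is correct and follows essentially the same route as the paper: it combines Lemma \ref{N aram} with the two-bit values from Proposition \ref{rev N}, uses the chain $C_{\rm 0F}(K)\le C_{\text{minE}}(K)\le S_{\rm 0,NS}(K)$ together with the extremality of $\cN_\alpha$ to bound the graph-level quantities by $2$, and notes that $\cos^2\alpha+\cos^{-2}\alpha>2$ for $0<\alpha\le\pi/4$ to get strictness. The only difference is that you spell out the AM--GM step and the role of extremality more explicitly than the paper does, which is harmless.
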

\begin{proof}
For general non-commutative bipartite graph $K$, it holds that $C_{\rm{0F}}(K)\le C_{\min \rm{E}}(K)$~\cite{Duan2015}. Then, by Proposition  \ref{rev N} and Lemma \ref{N aram},
we have
\begin{equation}
{C_{\rm{0F}}(K_\a)} \le {C_{\min\rm{E}}(K_\a)}=2 <\log \Aram(K_\a).
\end{equation}

From  Proposition \ref{rev N}  and Lemma \ref{N aram}, it is also clear that ${C_{0,\rm{NS}}(K_\a)} < \log\Aram(K_\a)$.
\end{proof}


\section{Discussions}
Interestingly, for the channel $\cN_\a$, its quantum fractional packing number is equal to its quantum Lov\'asz number.
Let us recall  that the Lov\'asz number of a classical graph $G$ has an  operational interpretation~\cite{Duan2016} as 
\begin{equation}
\vartheta(G)=\min\{\Aram(K): K^\dagger K<S_G \},
\end{equation}
where the minimization is over classical-quantum graphs $K$ and $S_G$ is non-commutative graph associated with $G$.
A natural and interesting question is that for the non-commutative graph $S$, do we have
\begin{equation}
\cvartheta(S)=\min\{\Aram(K): K^\dagger K<S\}?
\end{equation}

The non-commutative bipartite graph of $\cN_\a$ might be such an interesting example since Proposition \ref{theta N} and Lemma \ref{N aram} imply that $\cvartheta(\cN_\a)=\Aram(K_\a)$.

It remains unknown whether the Lov\'asz number coincides with $C_{\rm{0E}}$ for every classical channel. For any confusability graph $G$,  a variant of  Lov\'asz number called Schrijver number~\cite{Schrijver1979, McEliece1978} was proved to be a tighter upper bound on the entanglement-assisted independence number than the  Lov\'asz number~\cite{Cubitt2014a} . However, it remains unknown whether Schrijver number will converge to the Lov\'asz number in the asymptotic limit. Note that a gap between the Lov\'asz number and the regularized Schrijver number would imply a separation between $C_{\rm{0E}}(G)$ and $\vartheta(G)$. Moreover, it would be interesting to consider how to estimate the regularization of a sequence of semidefinite programs (or linear programs).

\section{Conclusions}
In summary, we have shown that there is a separation between the quantum Lov\'asz  number and the entanglement-assisted zero-error classical capacity.  We have explicitly exhibited a class of quantum channels for which the quantum  Lov\'asz number is strictly larger than the entanglement-assisted zero-error capacity. In particular, we have obtained the reversibility of these channels in the zero-error communication and simulation setting when assisted with quantum no-signalling correlations.

For any classical channel with a positive zero-error capacity, it is known that the feedback-assisted or NS-assisted zero-error capacity are both equal to the fractional packing number. In contrast, for quantum channels, we have shown that the feedback-assisted or the NS-assisted zero-error capacity is not given by the quantum fractional packing number in~\cite{Duan2016}. It also raises a new question to explore other quantum extensions of the fractional packing number.
 
\section*{Acknowledgment}
We would like to thank Andreas Winter for helpful suggestions.   We also thank the referees of AQIS'16 for useful comments which improved the presentation of this paper. This work was partly supported by the Australian Research Council under Grant Nos. DP120103776 and FT120100449.


\begin{thebibliography}{1}
\bibitem{Shannon1948}
C. E. Shannon, ``A mathematical theory of communication,'' \emph{ACM SIGMOBILE Mob. Comput. Commun. Rev.}, vol. 5, no. 1, pp. 3-55, 1948.

\bibitem{Shannon1956}
C. E. Shannon, ``The zero-error capacity of a noisy channel,'' \emph{IRE Trans. Inf. Theory}, vol. 2, no. 3, pp. 8-19, 1956.

\bibitem{Korner1998}
J. K\"orner and A. Orlitsky, ``Zero-error information theory,'' \emph{IEEE Trans. Inf. Theory}, vol. 44, no. 6, pp. 2207-2229, 1998.

\bibitem{Duan2008a}
R. Duan and Y. Shi, 
``Entanglement between two uses of a noisy multipartite quantum channel enables perfect transmission of classical information,'' \emph{Phys. Rev. Lett.}, vol. 101, no. 2, p. 20501, 2008.

\bibitem{Duan2009}
R. Duan, ``{Super-activation of zero-error capacity of noisy quantum
  channels},'' \emph{arXiv preprint arXiv:0906.2527}, 2009.

\bibitem{Cubitt2011a}
T. S. Cubitt, J. Chen, and A. W. Harrow, 
``Superactivation of the asymptotic zero-error classical capacity of a quantum channel,''
\emph{IEEE Trans. Inf. Theory}, vol. 57, no. 12, pp. 8114-8126, 2011.

\bibitem{Cubitt2012}
T. S. Cubitt and G. Smith, 
``An extreme form of superactivation for quantum zero-error capacities,''
\emph{IEEE Trans. Inf. Theory}, vol. 58, no. 3, pp. 1953-1961, 2012.

\bibitem{Shirokov2015a}
M. E. Shirokov and T. Shulman, 
``On superactivation of zero-error capacities and reversibility of a quantum channel,''
\emph{Commun. Math. Phys.}, vol. 335, no. 3, pp. 1159-1179, 2015.

\bibitem{Cubitt2010}
T. S. Cubitt, D. Leung, W. Matthews, and A. Winter, 
``Improving Zero-Error Classical Communication with Entanglement,''
 \emph{Phys. Rev. Lett.}, 
 vol. 104, no. 23, p. 230503, 2010.

\bibitem{Leung2012}
D. Leung, L. Man\v{c}inska, W. Matthews, M. Ozols, and A. Roy, 
``Entanglement can Increase Asymptotic Rates of Zero-Error Classical Communication over Classical Channels,'' 
\emph{Commun. Math. Phys.}, vol. 311, no. 1, pp. 97-111, 2012.

\bibitem{Duan2013}
R. Duan, S. Severini, and A. Winter, 
``Zero-error communication via quantum channels, noncommutative graphs, and a quantum Lov\'asz number,''
\emph{IEEE Trans. Inf. Theory}, vol. 59, no. 2, pp. 1164-1174, 2013.

\bibitem{Briet2015}
J. Bri\"et, H. Buhrman, M. Laurent, T. Piovesan, and G. Scarpa, ``Entanglement-assisted zero-error source-channel coding,'' IEEE Trans. Inf. Theory, vol. 61, no. 2, pp. 1124-1138, 2015.

\bibitem{Piovesan2015}
T. Piovesan, G. Scarpa, and C. Schaffner, ``Multiparty Zero-Error Classical Channel Coding With Entanglement,'' \emph{IEEE Trans. Inf. Theory}, vol. 61, no. 2, pp. 1113-1123, 2015.

\bibitem{Stahlke2014}
D. Stahlke, ``Quantum zero-error source-channel coding and non-commutative graph theory,'' \emph{IEEE Trans. Inf. Theory}, vol. 62, no. 1, pp. 554-577, 2016.

\bibitem{Matthews2012}
W. Matthews, ``A linear program for the finite blocklength converse of polyanskiy-poor-verdú via non-signaling codes,''
\emph{IEEE Trans. Inf. Theory}, vol. 58, no. 12, pp. 7036–7044, 2012.

\bibitem{Cubitt2011}
T. S. Cubitt, D. Leung, W. Matthews, and A. Winter, ``Zero-Error Channel Capacity and Simulation Assisted by Non-Local Correlations,''
\emph{IEEE Trans. Inf. Theory}, vol. 57, no. 8, pp. 5509–5523, 2011.

\bibitem{Duan2016}
R. Duan and A. Winter, 
``No-Signalling-Assisted Zero-Error Capacity of Quantum Channels and an Information Theoretic Interpretation of the Lov\'asz Number,''
\emph{IEEE Trans. Inf. Theory}, vol. 62, no. 2, pp. 891-914, 2016.

\bibitem{Leung2015c}
D. Leung and W. Matthews, 
``On the Power of PPT-Preserving and Non-Signalling Codes,''
\emph{IEEE Trans. Inf. Theory}, vol. 61, no. 8, pp. 4486-4499, 2015.

\bibitem{Xie2017a}
W. Xie, X. Wang, and R. Duan, ``Converse bounds for classical communication over quantum networks,'' arXiv:1712.05637, Jun. 2017.

\bibitem{Wang2016g} 
X. Wang, W. Xie, and R. Duan, 
``Semidefinite programming strong converse bounds for classical capacity,'' 
\emph{IEEE Trans. Inf. Theory}, vol. 64, no. 1, pp. 640-653, 2017.

\bibitem{Wang2017d}
X. Wang, K. Fang, and R. Duan, ``Semidefinite programming converse bounds for quantum communication,'' arXiv:1709.00200, Sep. 2017.

\bibitem{Wang2017a}
X. Wang, K. Fang, and M. Tomamichel, ``On converse bounds for classical communication over quantum channels,'' arXiv:1709.05258,  Sep. 2017.

\bibitem{Duan2015}
R. Duan, S. Severini, and A. Winter, 
``On zero-error communication via quantum channels in the presence of noiseless feedback,''
\emph{IEEE Trans. Inf. Theory}, vol. 62, no. 9, pp. 5260-5277, 2016.

\bibitem{Beigi2007}
S. Beigi and P. W. Shor, 
``{On the complexity of computing zero-error and
  Holevo capacity of quantum channels},'' \emph{arXiv preprint
  arXiv:0709.2090}, 2007.

\bibitem{Lovasz1979}
L. Lov\'asz, 
``On the Shannon capacity of a graph,''
\emph{IEEE Trans. Inf. Theory}, vol. 25, no. 1, pp. 1-7, 1979.

\bibitem{Vandenberghe1996}
L. Vandenberghe and S. Boyd, 
``Semidefinite programming,''
\emph{SIAM Rev.}, vol. 38, no. 1, pp. 49-95, 1996.

\bibitem{Beigi2010}
S. Beigi, 
``Entanglement-assisted zero-error capacity is upper-bounded by the Lov\'asz $\vartheta$ function,''
\emph{Phys. Rev. A}, vol. 82, no. 1, p. 10303, 2010.

\bibitem{Kretschmann2004}
D. Kretschmann and R. F. Werner, 
``Tema con variazioni: Quantum channel capacity,''
\emph{New J. Phys.}, vol. 6, 2004.

\bibitem{Bennett2002}
 C. H. Bennett, P. W. Shor, J. A. Smolin, and A. V Thapliyal, 
 ``Entanglement-assisted capacity of a quantum channel and the reverse Shannon theorem,'' 
 \emph{IEEE Trans. Inf. Theory}, vol. 48, no. 10, pp. 2637-2655, 2002.

\bibitem{Bennett2014}
C. H. Bennett, I. Devetak, A. W. Harrow, P. W. Shor, and A. Winter, 
``The Quantum Reverse Shannon Theorem and Resource Tradeoffs for Simulating Quantum Channels,'' \emph{IEEE Trans. Inf. Theory}, vol. 60, no. 5, pp. 2926-2959, 2014.

\bibitem{Berta2011a}
M. Berta, M. Christandl, and R. Renner, 
``The quantum reverse Shannon theorem based on one-shot information theory,'' \emph{Commun. Math. Phys.}, vol. 306, no. 3, pp. 579-615, 2011.

\bibitem{Cubitt2014a}
T. Cubitt, L. Man\v{c}inska, D. E. Roberson, S. Severini, D. Stahlke, and A. Winter, 
``Bounds on Entanglement-Assisted Source-Channel Coding via the Lov\'asz Number and Its Variants,''
\emph{IEEE Trans. Inf. Theory}, vol. 60, no. 11, pp. 7330-7344, 2014.

\bibitem{Mancinska2013}
L. Man\v{c}inska, G. Scarpa, and S. Severini, ``New Separations in Zero-Error Channel Capacity Through Projective Kochen-Specker Sets and Quantum Coloring,''
\emph{IEEE Trans. Inf. Theory}, vol. 59, no. 6, pp. 4025-4032, 2013.

\bibitem{Alon1998}
N. Alon, ``The Shannon capacity of a union,'' 
\emph{Combinatorica}, vol. 18, no. 3, pp. 301-310, 1998.

\bibitem{Barnum1998}
H. Barnum, M. A. Nielsen, and B. Schumacher, ``Information transmission through a noisy quantum channel,'' 
\emph{Phys. Rev. A}, vol. 57, no. 6, p. 4153, 1998.

\bibitem{Khachiyan1980}
L. G. Khachiyan, 
``Polynomial algorithms in linear programming,''
\emph{USSR Comput. Math. Math. Phys.}, vol. 20, no. 1, pp. 53-72, 1980.

\bibitem{Rains2001}
E. M. Rains, 
``A semidefinite program for distillable entanglement,''
\emph{IEEE Trans. Inf. Theory}, vol. 47, no. 7, pp. 2921–2933, 2001.

\bibitem{Wang2016a}
X. Wang and R. Duan, 
``A semidefinite programming upper bound of quantum capacity,'' 
in 2016 IEEE International Symposium on Information Theory (ISIT), vol. 2016-August, pp. 1690–1694.

\bibitem{Barnum2003}
H. Barnum, M. Saks, and M. Szegedy, ``Quantum query complexity and semi-definite programming,'' in 18th IEEE Annual Conference on Computational Complexity, 2003. Proceedings., 2003, no. August, pp. 179-193.

\bibitem{Wang2016d}
X. Wang and R. Duan, 
``Irreversibility of Asymptotic Entanglement Manipulation Under Quantum Operations Completely Preserving Positivity of Partial Transpose,''
\emph{Phys. Rev. Lett.}, vol. 119, p. 180506, 2017.

\bibitem{Eldar2003}
Y. C. Eldar, ``A semidefinite programming approach to optimal unambiguous discrimination of quantum states,'' \emph{IEEE Trans. Inf. Theory}, vol. 49, no. 2, pp. 446-456, 2003.

\bibitem{Li2017}
Y. Li, X. Wang, and R. Duan, ``Indistinguishability of bipartite states by positive-partial-transpose operations in the many-copy scenario,'' \emph{Phys. Rev. A}, vol. 95, no. 5, p. 052346, May 2017.

\bibitem{Watrous2012}
J. Watrous, ``Simpler semidefinite programs for completely bounded norms,'' \emph{Chicago J. Theor. Comput. Sci.}, vol. 19, no. 1, pp. 1-19, 2013.

\bibitem{Regev2013}
O. Regev and T. Vidick, ``Quantum XOR Games,'' in 2013 IEEE Conference on Computational Complexity, 2013, no. 844626, pp. 144-155.

\bibitem{Watrous2011b}
J. Watrous, Theory of quantum information. University of Waterloo, 2011.

\bibitem{Beckman2001}
D. Beckman, D. Gottesman, M. A. Nielsen, and J. Preskill, 
``Causal and localizable quantum operations,'' \emph{Phys. Rev. A}, vol. 64, no. 5, p. 52309, 2001.

\bibitem{Eggeling2002a}
T. Eggeling, D. Schlingemann, and R. F. Werner, 
``Semicausal operations are semilocalizable,'' \emph{Europhys. Lett.}, vol. 57, no. 6, p. 782, 2002.

\bibitem{Piani2006}
M. Piani, M. Horodecki, P. Horodecki, and R. Horodecki, 
``Properties of quantum nonsignaling boxes,'' \emph{Phys. Rev. A}, vol. 74, no. 1, p. 12305, 2006.

\bibitem{Oreshkov2012}
 O. Oreshkov, F. Costa, and C. Brukner, 
 ``Quantum correlations with no causal order,''
 \emph{Nat. Commun.}, vol. 3, p. 1092, 2012.

\bibitem{Duan2015a}
R. Duan and X. Wang, ``{Activated zero-error classical communication over quantum channels assisted with quantum no-signalling correlations},''
  \emph{arXiv preprint arXiv:1510.05437}, 2015.

\bibitem{Konig2009}
R. K\"onig, R. Renner, and C. Schaffner, ``The operational meaning of min-and max-entropy,''
\emph{IEEE Trans. Inf. Theory}, vol. 55, no. 9, pp. 4337-4347, 2009.

\bibitem{Tomamichel2012}
M.  Tomamichel, ``{A framework for non-asymptotic quantum information theory},''
  \emph{arXiv preprint arXiv:1203.2142}, 2012.

\bibitem{Choi1975}
M.-D. Choi, ``Completely positive linear maps on complex matrices,'' \emph{Linear Algebra Appl.}, vol. 10, no. 3, pp. 285-290, 1975.

\bibitem{Schrijver1979}
A. Schrijver, ``A comparison of the Delsarte and Lov\'asz bounds,'' \emph{IEEE Trans. Inf. Theory}, vol. 25, no. 4, pp. 425-429, 1979.

\bibitem{McEliece1978}
R. J. McEliece, E. R. Rodemich, and H. C. Rumsey Jr, ``The Lov\'asz bound and some generalizations,'' \emph{J. Comb. Inform. Syst. Sci}, vol. 3, no. 3, pp. 134-152, 1978.
\end{thebibliography}
\end{document}